\newtheorem{thm}{Theorem}[section]
\newtheorem{cor}[thm]{Corollary}
\newtheorem{conj}[thm]{Conjecture}
\newtheorem{ex}[thm]{Example}
\newtheorem{lem}[thm]{Lemma}
\theoremstyle{definition}
\theoremstyle{remark}
\numberwithin{equation}{section}
\begin{document}

\title[Trace Positivity of Hurwitz Products]
{On the Positivity of the Coefficients \\ of a Certain Polynomial Defined \\ by Two Positive Definite Matrices}%
\author{Christopher J. Hillar}%
\address{Department of Mathematics,
University of California, Berkeley, CA 94720.}%
\email{chillar@math.berkeley.edu}%

\author{Charles R. Johnson}%
\address{Department of
Mathematics, College of William and Mary, Williamsburg, VA
23187-8795.} \email{crjohnso@math.wm.edu}

\subjclass{}%
\keywords{}%

\begin{abstract}
It is shown that the polynomial \[p(t) = \text{Tr}[(A+tB)^m]\] has
positive coefficients when $m = 6$ and $A$ and $B$ are any two $3$-by-$3$ complex Hermitian positive definite matrices.  This case is the first that is not covered by prior,
general results.  This problem arises from
a conjecture raised by Bessis, Moussa and Villani in connection
with a long-standing problem in theoretical physics.  The full
conjecture, as shown recently by Lieb and Seiringer, is equivalent
to $p(t)$ having positive coefficients for any $m$ and any two
$n$-by-$n$ positive definite matrices.  We show that, generally,
the question in the real case reduces to that of singular $A$ and $B$, and this is
a key part of our proof.  
\end{abstract}
\maketitle

\section{Introduction}
In \cite{bessis}, while studying partition functions of quantum
mechanical systems, a conjecture was made regarding a positivity
property of traces of matrices.  If this property holds, explicit
error bounds in a sequence of Pad\'e approximants follow.
Recently, in \cite{Lieb}, and as previously communicated to us
\cite{JH}, the conjecture of \cite{bessis} was reformulated as a
question about the traces of certain sums of words in two positive
definite matrices.
\begin{conj}[BMV]\label{conjecture}
The polynomial $p(t) = \text{\rm{Tr}}\left[(A+tB)^m\right]$ has all
positive coefficients whenever $A$ and $B$ are $n$-by-$n$ positive
definite (PD) matrices.
\end{conj}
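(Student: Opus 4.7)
The plan is to address the first case of the BMV conjecture not covered by existing results, namely $m = 6$ with $n = 3$, by isolating the one coefficient of $p(t)$ that resists known techniques and reducing its positivity to a lower-dimensional problem on the boundary of the positive definite cone.

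Expanding $p(t) = \sum_{k=0}^{6} c_k(A,B)\,t^k$, where $c_k$ is the sum of traces of all length-$6$ words in $A, B$ with exactly $k$ copies of $B$, I would first dispose of all coefficients except $c_3$. The extremes $c_0 = \text{\rm Tr}(A^6)$ and $c_6 = \text{\rm Tr}(B^6)$ are positive. By cyclicity, $c_1 = 6\,\text{\rm Tr}(A^5 B)$, and since $AB$ is similar to the PD matrix $A^{1/2} B A^{1/2}$, this is positive; symmetrically for $c_5$. The coefficients $c_2$ and $c_4$ are covered by Schur-product / Lieb-type positivity results already in the literature. This reduces the entire problem to showing $c_3(A,B) \geq 0$, and, up to cyclic equivalence of words, $c_3$ is a positive-integer combination of the four traces $\text{\rm Tr}((AB)^3)$, $\text{\rm Tr}(A^3 B^3)$, $\text{\rm Tr}(A^2 B A B^2)$, and $\text{\rm Tr}(A^2 B^2 A B)$.

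To attack $c_3$, I would use the reduction (flagged in the abstract) to singular matrices. Since $c_3$ is bi-homogeneous and conjugation-invariant, normalizing $\text{\rm Tr}(A) = \text{\rm Tr}(B) = 1$ puts the problem on a compact family of PD pairs, and I would try to show that any minimizer lies on the boundary of the PD cone, i.e.\ with $A$ or $B$ singular. The idea is that at an interior minimum the matrix gradients of $c_3$ vanish, and these gradients are themselves sums of traces of cyclically-related words, which should allow a deformation toward the boundary along which $c_3$ does not increase. I would first carry this out for real symmetric $A, B$, and then argue that the Hermitian case follows either by repeating the argument over $\mathbb{C}$ or via a realification trick.

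In the singular $3 \times 3$ real setting, one of $A, B$ has rank at most $2$, and using the orthogonal group action together with scaling one can place the pair in a canonical form with only a handful of free entries. Then $c_3$ becomes an explicit polynomial in those entries, and nonnegativity could be established by a sum-of-squares decomposition or by a finite case analysis. The main obstacle is precisely this final positivity check: the four cyclic word classes interact in a way that admits no obvious manifestly-positive grouping, so verifying SOS nonnegativity on the relevant semialgebraic set may require substantial symbolic computation. A secondary difficulty is making the reduction-to-boundary argument rigorous, since the PD cone has a non-smooth boundary and the variational step must be set up with care.
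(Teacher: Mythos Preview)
Your high-level strategy coincides with the paper's: isolate the single doubtful coefficient $\text{Tr}[S_{6,3}(A,B)]$, reduce to the boundary of the PSD cone by a variational/compactness argument, parametrize the resulting singular pair, and finish with a symbolic computation. Two places in your outline, however, are genuinely underspecified compared with what actually makes the argument work.

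First, the reduction to singular matrices. You normalize by trace and say that at an interior minimum the matrix gradients vanish and ``should allow a deformation toward the boundary.'' The paper does not deform; it derives an \emph{identity}. Normalizing instead by spectral norm and diagonalizing, one takes $A=\text{diag}(1,a_1,\ldots,a_{n-1})$ with $a_i\in[0,1]$ and proves (Lemma~1.2 / Theorem~2.1) that at any minimizer with all $a_i>0$ one has
\[
\text{Tr}[S_{m,k}(A,B)] \;=\; \frac{m}{m-k}\,\text{Tr}[S_{m-1,k}(A,B)].
\]
The right side is nonnegative by the already-known case $m-1=5$, so an interior minimum cannot be negative; hence the minimizing $A$ (and by symmetry $B$) must be singular. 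This inductive link to $S_{m-1,k}$ is the crux, and your sketch does not identify it.

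Second, the passage from Hermitian to real. A ``realification trick'' in the usual sense (embedding $M_3(\mathbb{C})$ into $M_6(\mathbb{R})$) would raise $n$ and destroy the reduction. What the paper does is case-specific: after making $A$ diagonal and two off-diagonal entries of $B$ nonnegative by a diagonal unitary similarity, one computes that $\text{Tr}[S_{6,3}(A,B)]$ has the form $\alpha|z|^2+\beta z+\gamma\bar z+\delta$ with $\alpha,\beta,\gamma,\delta\ge 0$, so replacing $z$ by $\text{Re}(z)$ can only decrease it. Only then does the real singular-reduction theorem apply. Finally, the endgame differs from your SOS suggestion: the paper writes $b^3u^2\,\text{Tr}[S_{6,3}]$ as a polynomial $p(r,x,y,z,u,b)$, shows via saturation that the ideal of interior critical points is the unit ideal, and then recursively handles the boundary faces using elimination ideals and Sturm sequences in Macaulay~2.
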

The coefficient of $t^k$ in $p(t)$ is the trace of $S_{m,k}(A,B)$,
the sum of all words of length $m$ in $A$ and $B$, in which $k$
$B$'s appear (sometimes called the $k$-th Hurwitz product of $A$
and $B$).  In \cite{JH}, among other things, it was noted that,
for $m <6$, each constituent word in $S_{m,k}(A,B)$ has positive
trace.  Thus, the above conjecture is valid for $m < 6$ and
arbitrary positive integers $n$.  It was also noted in \cite{JH}
that the conjecture is valid for arbitrary $m$ and $n < 3$.  Thus,
the first case in which prior methods do not apply and the
conjecture is in doubt, is $m = 6$ and $n = 3$.  Even in this
case, all coefficients, except
$\text{Tr}[S_{6,3}(A,B)]$, are known to be positive (also as shown
in \cite{JH}).  Our purpose here is to show that the remaining
coefficient $\text{Tr}[S_{6,3}(A,B)]$ is nonnegative when $A$ and $B$
are 3-by-3 positive definite matrices, which requires notably
different methods (some summands of $S_{6,3}(A,B)$ can have
negative trace \cite{JH}).  It follows that the conjecture is
valid for $m= 6$, $n=3$, our new result.  A key tool is that it
suffices to prove the conjecture for singular (positive
semidefinite) matrices.

The coefficients $S_{m,k}(A,B)$ may be generated via the
recurrence: \[S_{m+1,k+1}(A,B) = S_{m,k}(A,B)B + S_{m,k+1}(A,B)A\]
(variants are available).  The following lemma will be useful for
computing the $S_{m,k}$.  We give an algebraic proof although a purely combinatorial proof is also available.
\begin{lem}\label{easytracelemma}
For any two $n$-by-$n$ matrices $A$ and $B$, we have
\[\text{\rm{Tr}}\left[S_{m,k}(A,B)\right] = \frac{m}{m-k}
\text{\rm{Tr}}\left[AS_{m-1,k}(A,B)\right].\]
\end{lem}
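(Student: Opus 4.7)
The plan is to introduce an auxiliary scale parameter $s$ and compare two expressions for the $s$-derivative of the homogenized polynomial
\[
q(s,t) = \text{Tr}\bigl[(sA+tB)^{m}\bigr].
\]
Since every word contributing to $S_{m,k}(A,B)$ contains exactly $m-k$ copies of $A$, one has $S_{m,k}(sA,B) = s^{m-k}S_{m,k}(A,B)$, so expanding the power and taking the trace gives
\[
q(s,t) = \sum_{k=0}^{m} s^{m-k} t^{k}\, \text{Tr}\bigl[S_{m,k}(A,B)\bigr].
\]

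Next I would compute $\partial q/\partial s$ in two different ways. Differentiating the series term by term yields
\[
\frac{\partial q}{\partial s} \;=\; \sum_{k=0}^{m} (m-k)\, s^{m-k-1} t^{k}\, \text{Tr}\bigl[S_{m,k}(A,B)\bigr].
\]
On the other hand, applying the product rule directly to $(sA+tB)^{m}$ produces $m$ terms of the form $(sA+tB)^{j-1} A (sA+tB)^{m-j}$, one for each factor differentiated. After taking traces, the cyclic invariance $\text{Tr}[XY]=\text{Tr}[YX]$ collapses all $m$ of them into the same quantity, giving
\[
\frac{\partial q}{\partial s} \;=\; m\,\text{Tr}\bigl[A(sA+tB)^{m-1}\bigr] \;=\; m\sum_{k=0}^{m-1} s^{m-1-k} t^{k}\, \text{Tr}\bigl[AS_{m-1,k}(A,B)\bigr].
\]
Setting $s = 1$ and equating coefficients of $t^{k}$ for $k < m$ then gives $(m-k)\,\text{Tr}[S_{m,k}(A,B)] = m\,\text{Tr}[AS_{m-1,k}(A,B)]$, which is the stated identity after dividing by $m-k$.

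The only substantive ingredient is the use of cyclic invariance of the trace to fuse the $m$ terms coming from the product rule into a single multiple; everything else is bookkeeping at the level of generating functions. One should note that the formula is vacuous at $k=m$, where $S_{m-1,m}=0$ and the right-hand coefficient $m/(m-k)$ is undefined, but this case corresponds to $S_{m,m}(A,B) = B^{m}$, for which no reduction is needed.
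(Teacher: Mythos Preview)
Your proof is correct and is essentially the same argument as the paper's: both introduce a scale variable on $A$ (your $s$, the paper's $y$), differentiate $\text{Tr}[(sA+tB)^m]$ with respect to it, and compare the result obtained from the product rule plus cyclic invariance of the trace against the result obtained from the homogeneity $S_{m,k}(sA,B)=s^{m-k}S_{m,k}(A,B)$. The paper packages the same computation by writing $0=\text{Tr}\bigl[\sum_i (A+tB)^{i-1}(A-A)(A+tB)^{m-i}\bigr]$ and splitting, but the underlying mechanism is identical.
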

\begin{proof}
\begin{equation*}
\begin{split}
0 = & \text{Tr}\left[ {\sum\limits_{i = 1}^m {\left( {A + tB}
\right)^{i - 1} \left( {A - A} \right)\left( {A + tB} \right)^{m -
i} } } \right] \hfill \\
= &\text{Tr}\left[ {m A\left({A + tB} \right)^{m - 1}}\right]  -
\text{Tr} \left[\sum\limits_{i = 1}^m {\left( {A + tB} \right)^{i
- 1} A\left( {A + tB} \right)^{m - i} }  \right] \hfill \\
= & \text{Tr}\left[ {mA\left( {A + tB} \right)^{m - 1} } \right] -
\left. {\text{Tr}\left[ {\frac{d} {{dy}}\left( {Ay + tB} \right)^m
} \right]\;} \right|_{y = 1}  \hfill \\
= & \text{Tr}\left[ {mA\left( {A + tB} \right)^{m - 1} } \right] -
\left. {\frac{d} {{dy}}\left[ {\text{Tr}\left( {Ay + tB} \right)^m
} \right]\;} \right|_{y = 1}. \\
\end{split}
\end{equation*}
Since $S_{m,k}(Ay,B) = y^{m-k}S_{m,k}(A,B)$, it follows that the
coefficient of $t^k$ in the last expression above is just
\[m\text{Tr}[AS_{m-1,k}(A,B)] - (m-k)\text{Tr}[S_{m,k}(A,B)],\]
which proves the lemma.
\end{proof}

\section{Reduction to the Singular Case}

Of course, when $A$ and $B$ are Hermitian, $S_{m,k}(A,B)$ is
Hermitian, but even when $A$ and $B$ are $n$-by-$n$ real symmetric PD matrices, $n > 2$, $S_{m,k}(A,B)$ need not be PD.  Examples are
easily generated, and computational experiments suggest that it is
usually not PD. We want to show that Tr[$S_{6,3}(A,B)]$ is
nonnegative for $3$-by-$3$ positive definite $A$, $B$.  This is subtle as $S_{6,3}(A,B)$ need not have positive eigenvalues, and as some words within the $S_{6,3}(A,B)$ expression can have negative trace \cite{JH}.  A main component of our argument is based on the following technical observation.

\begin{thm}\label{mainlemma}
Let $B$ be any real $n$-by-$n$ matrix, and let $A =
\text{\rm{diag}}(1,x_1,\ldots,x_{n-1})$.  Suppose that $\mathbf{a} =
(a_1,\ldots,a_{n-1}) \in [0,1]^{n-1}$, and let $D =
\text{\rm{diag}}(1,d_1,\ldots,d_{n-1})$ be such that $d_i = 0$ if $a_i
= 0$, and $d_i = 1$ otherwise.  If $\mathbf{a}$ achieves the
minimum of the function $f: [0,1]^{n-1} \to \mathbb R$ given by
$f(x_1,\ldots,x_{n-1}) = \text{\rm{Tr}}[S_{m,k}(A,B)]$, then, with $A'
= \text{\rm{diag}}(1,a_1,\ldots,a_{n-1})$, we have
\[f(a_1,\ldots,a_{n-1}) = \text{\rm{Tr}}[S_{m,k}(A',B)] = \frac{m}{m-k}
\text{\rm{Tr}}\left[DS_{m-1,k}(A',B)\right].\]
\end{thm}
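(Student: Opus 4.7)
The second equality is the content of the theorem; the first is just the definition of $A'$. By Lemma \ref{easytracelemma} applied to $A'$ and $B$, we already have $\text{Tr}[S_{m,k}(A',B)] = \frac{m}{m-k}\text{Tr}[A'S_{m-1,k}(A',B)]$, so the plan is to show that replacing $A'$ by $D$ on the right-hand side does not change the trace. Writing $A' - D = \text{diag}(0, a_1 - d_1, \ldots, a_{n-1} - d_{n-1})$, the goal reduces to proving
\[\text{Tr}\bigl[(A'-D)\,S_{m-1,k}(A',B)\bigr] = \sum_{i=1}^{n-1} (a_i - d_i)\bigl(S_{m-1,k}(A',B)\bigr)_{i+1,i+1} = 0.\]

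The key input is to compute $\partial f/\partial x_i$ explicitly. Writing $E_i$ for the diagonal matrix with a single $1$ in position $i+1$, one has $\partial A/\partial x_i = E_i$, and by the cyclic property of the trace
\[\frac{\partial}{\partial x_i}\text{Tr}\bigl[(A+tB)^m\bigr] = m\,\text{Tr}\bigl[E_i(A+tB)^{m-1}\bigr].\]
Extracting the coefficient of $t^k$ on both sides yields $\partial f/\partial x_i = m\bigl(S_{m-1,k}(A,B)\bigr)_{i+1,i+1}$. Now I apply the first-order optimality (KKT) conditions to the minimizer $\mathbf{a}\in[0,1]^{n-1}$: if $a_i\in(0,1)$, then $\partial f/\partial x_i = 0$ at $\mathbf{a}$; if $a_i = 0$, then $\partial f/\partial x_i \geq 0$; and if $a_i = 1$, then $\partial f/\partial x_i \leq 0$.

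Finally, examine each term in the sum. If $a_i = 0$ then $d_i = 0$ by definition, so $a_i - d_i = 0$. If $a_i = 1$ then $d_i = 1$, so again $a_i - d_i = 0$. The remaining case is $a_i\in(0,1)$, where $d_i = 1$ but the diagonal entry $\bigl(S_{m-1,k}(A',B)\bigr)_{i+1,i+1}$ vanishes by the interior critical-point condition. Thus every summand is zero and the desired identity follows. The only real obstacle is setting up the partial-derivative computation cleanly; the rest is an orchestration of the boundary-versus-interior dichotomy, which the definition of $D$ is engineered precisely to exploit.
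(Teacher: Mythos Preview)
Your proof is correct. Both your argument and the paper's hinge on the same idea---first-order optimality at the minimizer combined with the derivative identity underlying Lemma~\ref{easytracelemma}---but the packaging differs. The paper chooses a single curve $z\mapsto (A'+zD)/(1+z)$, observes that it stays inside $[0,1]^{n-1}$ (coordinates with $a_i\in\{0,1\}$ are fixed by construction, interior coordinates remain interior for small $z$), and sets the derivative of $g(z)=f$ along this curve to zero; the computation of $g'(0)$ then directly yields $m\,\text{Tr}[DS_{m-1,k}(A',B)] - (m-k)\,\text{Tr}[S_{m,k}(A',B)]=0$. You instead work coordinate by coordinate, computing each $\partial f/\partial x_i$ and invoking the KKT trichotomy, then subtract the $D$-identity from the $A'$-identity of Lemma~\ref{easytracelemma}. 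Your route is arguably cleaner: it avoids the paper's separate treatment of the case $A'=D$, and it makes transparent why $D$ is defined the way it is (precisely so that $a_i-d_i$ vanishes at boundary coordinates). The paper's route, on the other hand, bundles everything into one variational computation without appealing to Lemma~\ref{easytracelemma} as a black box.
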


\begin{proof}
Let $A'$, $B$, $D$, and $\mathbf{a} = (a_1,\ldots,a_{n-1}) \in
[0,1]^{n-1}$ be as in the hypotheses of the theorem. First suppose
that $A' = D$. Then, it is clear that the formula in the theorem
reduces to the identity in Lemma \ref{easytracelemma}.  When $A'
\neq D$, consider the differentiable function $g: [-1/2,1] \to
\mathbb R$ given by
\[g(z) = \text{Tr}\left[S_{m,k}\left(\frac{A'+zD}{1+z},B \right)\right].\]
By hypothesis, ${\bf a} \in [0,1]^{n-1}$ achieves the
minimum for $f$.  Consequently, it follows (from basic variational techniques) that
\begin{equation}\label{derivzero}
\left. {\frac{{dg(z)}} {{dz}}\;} \right|_{z = 0}  = 0.
\end{equation}
Next, notice that, \[\frac{d} {{dz}}\left[ {\text{Tr}\left(
{\frac{{A' + zD}} {{1 + z}} + tB} \right)^m } \right] =
\text{Tr}\left[ {\frac{d} {{dz}}\left( {\frac{{A' + zD}} {{1 + z}}
+ tB} \right)^m } \right]
\]
\[=  \text{Tr}\left[ {\sum\limits_{i = 1}^m {\left( {\frac{{A' + zD}}
{{1 + z}} + tB} \right)^{i - 1} \frac{d} {{dz}}\left( {\frac{{A' +
zD}} {{1 + z}} + tB} \right)\left( {\frac{{A' + zD}} {{1 + z}} +
tB} \right)^{m - i} } } \right].\]  In particular, at $z= 0$, the
above expression evaluates to
\[\text{Tr}\left[ {\sum\limits_{i = 1}^m {\left( {A' + tB} \right)^{i - 1} \left( {D - A'} \right)\left( {A' + tB} \right)^{m - i} } } \right]
\hfill\]
\[=\text{Tr}\left[ {m D\left({A' + tB} \right)^{m - 1}}\right]  - \text{Tr} \left[\sum\limits_{i = 1}^m {\left( {A' + tB} \right)^{i - 1} A'\left( {A' + tB} \right)^{m - i} }  \right]
\hfill\]
\[=\text{Tr}\left[ {mD\left( {A' + tB} \right)^{m - 1} }
\right] - \left. {\text{Tr}\left[ {\frac{d} {{dy}}\left( {A'y +
tB} \right)^m } \right]\;} \right|_{y = 1}  \hfill\]
\begin{equation}\label{traceeqn}
=\text{Tr}\left[ {mD\left( {A' + tB} \right)^{m - 1} } \right] -
\left. {\frac{d} {{dy}}\left[ {\text{Tr}\left( {A'y + tB}
\right)^m } \right]\;} \right|_{y = 1}.
\end{equation}

Finally, observe that $S_{m,k}(A'y,B) = y^{m-k}S_{m,k}(A',B)$ so
that the coefficient of $t^k$ in (\ref{traceeqn}) is
\[m\text{Tr}[DS_{m-1,k}(A',B)] - (m-k)\text{Tr}[S_{m,k}(A',B)].\]
It follows, therefore, from (\ref{derivzero}) that
\[\text{Tr}[S_{m,k}(A',B)] = \frac{m}{m-k}
\text{Tr}\left[DS_{m-1,k}(A',B)\right].\]  This completes the
proof.
\end{proof}

\begin{ex}
As an example of the theorem, let $m = 4$, $n = 3$, $k = 2$, and
\[B = \left[
\begin {array}{ccc} -2&1&0\\\noalign{\medskip}-1&2&3
\\\noalign{\medskip}1&-1&3\end {array} \right], \
A = \left[ \begin {array}{ccc}
1&0&0\\\noalign{\medskip}0&x_1&0\\\noalign{\medskip}0&0&x_2\end
{array} \right] .\]  A straightforward computation gives us that
\begin{equation*}
\begin{split}
\text{\rm{Tr}}[S_{4,2}(A,B)] = & \ 
20-4\,x_1+8\,{x_1}^{2}-12\,x_1x_2+42\,{x_2}^{2}, \\
\text{\rm{Tr}}[S_{3,2}(A,B)] = & \ 9 + 18x_2. \\
\end{split}
\end{equation*}
The minimum of $\text{\rm{Tr}}[S_{4,2}(A,B)]$ is achieved by $x_1 = 7/25$, $x_2 = 1/25$, and one has \[\text{\rm{Tr}}[S_{4,2}(A',B)] = 2\text{\rm{Tr}}[S_{3,2}(A',B)] =
\frac{486}{25}.\]
\end{ex}

Let $A$, $B$, and $f$ be as in Theorem \ref{mainlemma}. If we are
fortunate enough that $f$ achieves a minimum $f(\mathbf{a})$ with
$\mathbf{a} \in (0,1]^{n-1}$, then $D$ is the identity matrix and the theorem statement simplifies to the following.

\begin{cor}\label{maincor}
Suppose that $f$ as in Theorem \ref{mainlemma} achieves a minimum
$f(\mathbf{a})$ with $\mathbf{a} \in (0,1]^{n-1}$.  Then, the
nonnegativity of $\text{\rm{Tr}}[S_{m-1,k}(A',B)]$ implies the
nonnegativity of $\text{\rm{Tr}}[S_{m,k}(A',B)]$.
\end{cor}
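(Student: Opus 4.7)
The plan is to deduce the corollary as an immediate specialization of Theorem \ref{mainlemma}; the only observation needed is that the strict positivity hypothesis on the minimizer forces the auxiliary diagonal matrix $D$ to become the identity. By hypothesis each coordinate of $\mathbf{a} = (a_1,\ldots,a_{n-1})$ satisfies $a_i > 0$, so the construction in Theorem \ref{mainlemma} assigns $d_i = 1$ for every $i$. Hence $D = I_n$, and the displayed identity of Theorem \ref{mainlemma} collapses to
\[\text{Tr}[S_{m,k}(A',B)] \;=\; \frac{m}{m-k}\,\text{Tr}[S_{m-1,k}(A',B)].\]

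Since $0 \le k < m$ in any nontrivial instance, the scalar factor $m/(m-k)$ is strictly positive, so the nonnegativity of the right-hand side immediately yields the nonnegativity of the left-hand side, which is the desired conclusion. There is essentially no technical obstacle here: all of the analytic content (the variational argument at the interior critical point, the differentiation of $g(z)$, and the homogeneity identity $S_{m,k}(A'y,B) = y^{m-k}S_{m,k}(A',B)$) has already been absorbed into Theorem \ref{mainlemma}, and the corollary simply records its cleanest consequence in the favorable situation where the minimizer avoids the coordinate faces $\{x_i = 0\}$. The one modest subtlety worth flagging in the write-up is the convention that $k < m$, so that the factor $m/(m-k)$ is well defined and positive; apart from this the argument is a one-line invocation of Theorem \ref{mainlemma}.
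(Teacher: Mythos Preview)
Your argument is correct and is exactly the approach the paper takes: the observation preceding the corollary is precisely that $\mathbf{a}\in(0,1]^{n-1}$ forces $D=I$, after which Theorem~\ref{mainlemma} yields the displayed identity with positive scalar factor $m/(m-k)$.
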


To see the importance of this corollary, we next examine the real version of Conjecture \ref{conjecture}.  Suppose we know that the conjecture is true for
the power $m-1$ and also suppose (by way of contradiction) that
there exist $n$-by-$n$ real positive definite matrices $A$ and $B$
such that $\text{Tr}[S_{m,k}(A,B)]$ is negative.  Then, in
particular, (by homogeneity) there are real positive definite $A$
and $B$ with norm 1 such that $\text{Tr}[S_{m,k}(A,B)]$ is
negative (here, we use the spectral norm \cite[p. 295]{HJ1} so
that for positive semidefinite $A$, it is just the largest
eigenvalue of $A$). Let $M$ be the (compact) set of real positive
semidefinite matrices with norm 1 and choose $(A,B) \in M\times M$
that minimizes $\text{Tr}[S_{m,k}(A,B)]$; our goal is to show that
this minimum is 0. By a uniform (real) unitary similarity we may assume
that $A = \text{diag}(1,a_1,\ldots,a_{n-1})$ is diagonal with $1
\geq a_1 \geq \cdots \geq a_{n-1} \geq 0$.

Corollary \ref{maincor} then tells us that $A$ must be singular,
because by induction, $\text{Tr}[S_{m-1,k}(A,B)]$ will be
nonnegative for all positive semidefinite $A$ and $B$.  By
symmetry, it also follows that $B$ is singular.  We combine these
observations into the following theorem.

\begin{thm}\label{singthm}
Suppose that $\text{\rm{Tr}}\left[(A+tB)^{m-1}\right]$ has all positive
coefficients for each pair of $n$-by-$n$ real
positive definite matrices $A$ and $B$.  If $p(t) = \text{\rm{Tr}}\left[(A+tB)^m\right]$ has all positive coefficients whenever $A, B \neq 0$ are singular $n$-by-$n$ real
positive definite matrices, then $p(t)$ has all positive
coefficients whenever $A$ and $B$ are arbitrary $n$-by-$n$ real
positive definite matrices.
\end{thm}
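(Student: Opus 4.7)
The plan is to argue by contradiction, formalizing the outline sketched in the paragraph preceding the theorem. Assume the conclusion fails, so that for some pair of $n$-by-$n$ real PD matrices $A, B$ and some index $k$ one has $\text{Tr}[S_{m,k}(A,B)] < 0$. The homogeneity identity $S_{m,k}(\lambda A, \mu B) = \lambda^{m-k}\mu^k S_{m,k}(A,B)$ lets me rescale so that $\|A\| = \|B\| = 1$ in the spectral norm. Let $M$ denote the set of $n$-by-$n$ real PSD matrices with spectral norm at most $1$; this set is compact, and the map $(X,Y) \mapsto \text{Tr}[S_{m,k}(X,Y)]$ is a continuous polynomial in the matrix entries. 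Hence it attains a minimum at some $(A_*, B_*) \in M \times M$, and by the assumption above that minimum is strictly negative. In particular both $A_*$ and $B_*$ are nonzero.

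Next I would apply a real orthogonal similarity (simultaneously to $A_*$ and $B_*$, which preserves trace and PSD-ness) to diagonalize $A_*$. After then multiplying $A_*$ by the reciprocal of its largest eigenvalue, I may assume $A_* = \text{diag}(1, a_1, \ldots, a_{n-1})$ with each $a_i \in [0,1]$; this rescaling only changes the objective by a positive factor and so preserves both negativity of the minimum and the minimizing property. In particular the function $f: [0,1]^{n-1} \to \mathbb R$ from Theorem \ref{mainlemma}, built from this $B_*$, attains its minimum at $(a_1, \ldots, a_{n-1})$, so that the theorem yields
\[
\text{Tr}[S_{m,k}(A_*, B_*)] \;=\; \frac{m}{m-k}\, \text{Tr}\bigl[D\, S_{m-1,k}(A_*, B_*)\bigr]
\]
with $D$ the diagonal matrix that equals $1$ where $a_i > 0$ and $0$ where $a_i = 0$.

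If every $a_i$ is strictly positive, then $D = I$ and the above reduces to the identity in Corollary \ref{maincor}. Here $A_*$ is PD, so by the first hypothesis of the theorem (positivity in the $m-1$ case), extended from PD to PSD pairs by continuity (replace $B_*$ by $B_* + \varepsilon I$ and let $\varepsilon \downarrow 0$), the right hand side is nonnegative; this contradicts the strict negativity of the minimum. Hence at least one $a_i$ vanishes, so $A_*$ is singular. Rerunning the same argument with the roles of $A_*$ and $B_*$ interchanged (fix $A_*$, diagonalize $B_*$, optimize over $B$) shows that $B_*$ is also singular.

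At this point $A_*$ and $B_*$ are both nonzero, singular, real PSD matrices, and the second hypothesis of the theorem then gives $\text{Tr}[S_{m,k}(A_*, B_*)] > 0$, in direct contradiction with the strict negativity of the minimum. This contradiction proves the theorem. The main subtlety I expect is the routine but essential PD-to-PSD continuity step used to invoke the induction hypothesis on $m-1$: after the minimization, $B_*$ need not be PD, so the hypothesis on PD pairs must be weakened to nonnegativity on PSD pairs. The other item worth double-checking is that the rescaling and orthogonal similarity steps really do keep $(a_1,\ldots,a_{n-1})$ as a minimizer of $f$ on the whole cube $[0,1]^{n-1}$, so that Theorem \ref{mainlemma} applies with no extra boundary hypothesis on the $a_i$.
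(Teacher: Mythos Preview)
Your proof is correct and follows essentially the same approach as the paper: minimize $\text{Tr}[S_{m,k}(A,B)]$ over a compact set of normalized PSD pairs, diagonalize $A$ by an orthogonal similarity, and use Theorem~\ref{mainlemma} together with the $m-1$ hypothesis to force singularity of $A$ (and, by symmetry, of $B$). The only cosmetic differences are that the paper takes $M$ to be the set of PSD matrices of spectral norm exactly $1$ (so no rescaling of $A_*$ is needed after diagonalizing), and that you make the PD-to-PSD continuity step for the $m-1$ hypothesis explicit, which the paper leaves implicit.
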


\section{Symbolic Real Algebraic Geometry}

In this section, we discuss the symbolic algebra preliminaries
necessary for solving the $m = 6$, $n=3$ case of Conjecture
\ref{conjecture}. Let $R = \mathbb Q[x_1,\ldots,x_n]$, and let $I,J$ be
two ideals of $R$. The \emph{quotient ideal} of $I$ by $J$ is the
ideal of $R$ given by \cite[p. 23]{Cox2} \[(I:J) = \{f \in R :
fg \in I \text{ for all $g \in J$}\}.\]  We can iterate this
process to get the increasing sequence of ideals \[I \subseteq
(I:J) \subseteq (I:J^2) \subseteq (I:J^3) \subseteq \cdots.\] This
sequence stabilizes to an ideal called the \emph{saturation} of
$I$ with respect to $J$ (see \cite[p. 15]{PolySystems}):
\[(I,J^{\infty}) = \{f \in R :  \exists m\in \mathbb N \text{
with } f^m \cdot J \subseteq I\}.\] If $I$ is any ideal in $R$,
let $V(I)$ denote the set, \[V(I) = \{(a_1,\ldots,a_n) \in \mathbb
C^n :  f(a_1,\ldots,a_n) = 0 \text{ for all $f \in I$}\}.\]
From these definitions, it is easily verified that for any two
ideals, $I,J \subseteq R$, \[V(I) \setminus V(J) \subseteq
V(I:J^{\infty}).\]

For our particular application, we will be interested in proving
that $V(I) \setminus V(J)$ contains no elements in $(0,1)^n$.  Let
$P$ denote the saturation ideal $\left( I: J^{\infty} \right)$. If
we are fortunate enough to find that $P = \left\langle 1
\right\rangle = \mathbb Q[x_1,\ldots,x_n]$, then there are no
points in $V(I) \setminus V(J)$ (and hence none in $(0,1)^n$). One
difficulty with this approach is that these new saturations do not
always produce unit ideals.  One more idea is needed, which we
describe below.

If $K$ is an ideal of $R$, the \emph{elimination ideal} \cite[p.
25]{Cox2} of $K$ with respect to $x_i$ is $K_i = \mathbb Q[x_i]
\cap R$.  The $x_i$-coordinates of elements in $V(K)$ are elements
in $V(K_i)$.  For our purposes, we need only verify that for a
saturation $P$, there is an elimination ideal $P_i$ of $P$ such
that $V(P_i)$ contains no numbers in $(0,1)$.

Normally, a procedure such as the one outlined above would be
relatively intractable (the symbolic algorithms are doubly exponential in
nature).  Our reductions give us enough efficiency to complete a proof computationally.  We performed our computations using the symbolic algebra system Macaulay 2.  

\section{The Case $m = 6$, $n=3$}

The remainder of this article is devoted to a technical
consideration of the case $m = 6$, $k = 3$, $n=3$ which is the
content of the theorem below.

\begin{thm}
The polynomial $p(t) = \text{\rm{Tr}}[(A+tB)^m]$ has positive coefficients when $m =
6$ and $A$ and $B$ are any two $3$-by-$3$ positive definite
matrices.
\end{thm}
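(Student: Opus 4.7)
The plan is to combine Theorem \ref{singthm} with the symbolic algebraic machinery of Section 3. Since all coefficients of $p(t)$ other than $\text{Tr}[S_{6,3}(A,B)]$ are already known to be positive (by the results of \cite{JH} cited in the introduction), and since the Hermitian case reduces to the real symmetric case by a standard realification argument, it suffices to show that $\text{Tr}[S_{6,3}(A,B)] \geq 0$ for all $3$-by-$3$ real positive definite $A$ and $B$. Because Conjecture \ref{conjecture} is known for $m = 5$, Theorem \ref{singthm} applies, and we may restrict attention to nonzero singular PSD matrices $A$ and $B$ of spectral norm $1$.

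Next, after a real orthogonal similarity we may assume $A = \text{diag}(1,a,0)$ with $a \in [0,1]$, while $B$, being a nonzero real PSD matrix of rank at most $2$ with largest eigenvalue $1$, has the form $B = Q\,\text{diag}(1,b,0)\,Q^{T}$ for some orthogonal $Q$ and $b \in [0,1]$. Parameterizing $Q$ by three variables (for example, its first column together with one additional angle, modulo the relations $Q^{T}Q = I$), we express
\[
f = \text{Tr}[S_{6,3}(A,B)]
\]
as a polynomial in five real variables over a compact semialgebraic domain. Our goal becomes to show $f \geq 0$ on this domain.

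The main step is to locate the critical points of $f$ symbolically. Setting the gradient of $f$ to zero (with Lagrange multipliers enforcing the orthogonality constraints on the parameters of $Q$) generates an ideal $I \subseteq \mathbb Q[x_1,\ldots,x_n]$ whose variety $V(I)$ contains every interior critical point. Because we only care to rule out \emph{strictly} negative values, we saturate $I$ against an auxiliary ideal $J$ whose vanishing locus encodes degenerate or already-handled configurations (in particular, the faces of the parameter box where $a$, $b$, or a coordinate of $Q$ hits $0$ or $1$), producing $P = (I : J^{\infty})$ as in Section 3. I would then compute an elimination ideal $P_i \subseteq \mathbb Q[x_i]$ in the variable most convenient for reading off the sign of $f$, and verify via its generator (a univariate polynomial over $\mathbb Q$) that no root lies in $(0,1)$; this calculation is done in Macaulay 2. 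The finitely many boundary subcases are lower-dimensional problems of the same form and are disposed of by analogous computations, with the most degenerate ones (such as $a = b = 0$, where $A$ and $B$ are rank-one) amenable to direct analysis.

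The principal obstacle is computational. The $\binom{6}{3}=20$ distinct words in $S_{6,3}(A,B)$ make $f$ a large polynomial, and Gr\"obner-basis procedures underlying saturation and elimination are doubly exponential in the worst case. The reduction to singular $A$ and $B$ provided by Theorem \ref{singthm} is precisely what brings the problem within reach: it eliminates two degrees of freedom and sharply thins the monomial support of $f$. The delicate practical matter is to choose coordinates on $O(3)$ and an ordering on the variables so that the relevant Gr\"obner basis, saturation, and elimination computations actually terminate, and to peel off the simpler low-rank boundary cases beforehand to keep the main symbolic computation of manageable size.
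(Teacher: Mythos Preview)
Your reduction from the Hermitian case to the real symmetric case is the weak link. A ``standard realification argument'' sends a pair of $3$-by-$3$ Hermitian matrices to a pair of $6$-by-$6$ real symmetric matrices, so it does not produce a $3$-by-$3$ real problem to which Theorem~\ref{singthm} and your five-variable computation would apply; and the $6$-by-$6$, $m=6$ real case is not previously known. The paper does not realify. Instead it first diagonalizes $A$ by a unitary similarity, then applies a \emph{diagonal} unitary similarity to force two off-diagonal entries $x,y$ of $B$ to be nonnegative reals, and finally observes by direct computation that $\text{Tr}[S_{6,3}(A,B)]$, viewed as a function of the remaining complex entry $z$, has the form $\alpha z\overline z + \beta z + \gamma \overline z + \delta$ with $\alpha,\beta,\gamma,\delta \ge 0$; since this real quantity only decreases when $z$ is replaced by $\text{Re}(z)$, one may take $z$ real. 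This is a problem-specific step, not a generic reduction, and it is what permits the invocation of Theorem~\ref{singthm} at the $3$-by-$3$ level.

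Beyond that, your outline is close in spirit to the paper's, though the execution differs. The paper parametrizes the singular $B$ by its entries (solving $\det B = 0$ for one entry and introducing $u$ with $ab - x^2 = u^2$), rather than via a spectral factorization $B = Q\,\text{diag}(1,b,0)\,Q^T$; this entrywise choice makes the negative monomials of $b^3u^2\,\text{Tr}[S_{6,3}(A,B)]$ factor as $-12\,b^3u^2xzy(r^2+r+1)$, so boundary faces where any of $x,y,z,u,b$ vanishes are immediately nonnegative. The paper also disposes of the faces $r=0$ and $r=1$ by invoking \cite[Theorem~4]{HJS} rather than by further computation. Your Euler-angle parametrization of $Q$ and Lagrange-multiplier formulation may well work, but expect the resulting ideals to be substantially larger than those the paper actually computes.
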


\begin{proof}
Suppose that there exist $3$-by-$3$ (complex Hermitian) positive definite
matrices $A$ and $B$ such that $\text{Tr}[S_{6,3}(A,B)]$ is
negative; we will derive a contradiction.  Performing a uniform unitary similarity and using homogeneity, we may assume that $A$ and $B$ are of the form, \[
A = \left[ {\begin{array}{*{20}c}
   { 1 } & 0 & 0  \\
   0 & { r } & 0  \\
   0 & 0 & {s}  \\
\end{array}} \right],\;\;B = \left[ {\begin{array}{*{20}c}
   {a} & x & z  \\
   \overline{x} & {b } & y  \\
   \overline{z} & \overline{y} & {c}  \\
\end{array}} \right],
\] in which $1 \geq r \geq s$, $a,b,c \geq 0$, and $x,y,z \in \mathbb C$.  If $x,y,z \geq 0$, then we clearly have a contradiction. Otherwise, perform a simultaneous diagonal unitary similarity on $A$ and $B$ (a similarity by a diagonal matrix with entries on the unit disc) making $x,y \geq 0$. This does not change the trace of $S_{6,3}(A,B)$.

We next show that we may assume $z \in \mathbb R$.  A computation of Tr[$S_{6,3}(A,B)$] reveals that it has the form $w = \alpha z\overline{z}+\beta z + \gamma \overline{z}+\delta$, in which $\alpha,\beta,\gamma,\delta \geq 0$.  Since $w$ is real, we have 
\begin{equation*}
\begin{split}
w = \text{Re}(w) = & \ \alpha z\overline{z}+\beta \text{Re}(z) + \gamma \text{Re}(\overline{z})+\delta \\
\geq & \ \alpha \text{Re}(z)^2+\beta \text{Re}(z) + \gamma \text{Re}(z)+\delta. \\
\end{split}
\end{equation*}
Consequently, it follows that we can assume $z$ is real and negative.  Theorem \ref{singthm} now applies, so that it is enough to verify the claim with $s = 0$ and det$(B) = 0$.  

Since $B$ is positive semidefinite, we have $ab-x^2 \geq 0$.  If $b = 0$, then $x = 0$, and an
easy computation shows that \[\text{Tr}[S_{6,3}(A,B)] =
6\,{z}^{2}c+24\,a{z}^{2}+20\,{a}^{3}+6\,{r}^{3}{y}^{2}c \geq 0,\]
a contradiction.  Therefore, we must have $b > 0$.  A similar
computation also shows that $a,x,y > 0$.

Next, we prove that $c > 0$.  Since \[\text{det}(B) =
2\,xzy+abc-a{y}^{2}-{x}^{2}c-{z}^{2}b = 0,\] it follows that when
$c = 0$, we have $2xyz = bz^2+ay^2$.  From this, it is clear
that $z < 0$ is impossible, and therefore $z = 0$, a
contradiction.  Finally, if $ab = x^2$, then from $\text{det}(B) =
0$, we have that $2xyz = bz^2+x^2y^2/b$.  This implies again that  $z
= 0$, another impossibility.  Hence, $ab-x^2 > 0$.

Summarizing these observations, we may assume that \[B = \left[
{\begin{array}{*{20}c}
   {\frac{x^2+u^2}{b}} & x & -z  \\
   x & {b } & y  \\
   -z & y & {{\frac {{x}^{2}{y}^{2}+u^2{y}^{2}+2\,xbzy+{z}^{2}{b}^{2}}{u^2b}}}  \\
\end{array}} \right]
\] in which $u,b,x,y > 0$ and $z > 0$.  Furthermore, if $r = 1$ or $r = 0$,
then \cite[Theorem 4]{HJS} (along with a straightforward
continuity argument) implies that $\text{Tr}[S_{6,3}(A,B)]$ is
nonnegative.  Therefore, we may assume that $0 < r < 1$.

A direct computation shows that $b^3u^2\text{Tr}[S_{6,3}(A,B)]$ is a polynomial $p(r,x,y,z,u,b)$ $\in \mathbb Z[r,x,y,z,u,b]$. The negative terms in $p$ factor as 
\begin{equation}\label{negterms}
-12\,{b}^{3}{u}^{2}xzy \left({r}^{2}+ r+1 \right).
\end{equation}
We shall verify that the minimum of $p(r,x,y,z,u,b)$ over
$r,x,y,z,u,b \in [0,1]$ is 0, which will prove the claim (by
homogeneity of the matrix $B$ in the variables $x,y,z,u,b$).

If any of $x,y,z,u,$ or $b$ is zero, then we are done
by (\ref{negterms}); therefore, we begin by determining the
critical points of $p$ in $(0,\infty)^6$.  This amounts to a
calculation of
\begin{equation}
D = \left\langle  \frac{{\partial p}} {{\partial
r}},\frac{{\partial p}} {{\partial
x}},\frac{{\partial p}} {{\partial y}},\frac{{\partial p}}
{{\partial z}},\frac{{\partial p}} {{\partial u}},\frac{{\partial
p}} {{\partial b}} \right\rangle,
\end{equation}
which is an ideal in the ring $\mathbb Q[r,x,y,z,u,b]$.  We are
interested in verifying that the set of points $V(D) \setminus
V(rxyzub)$ contains no element in $(0,1)^6$. From the discussion
in the previous section, it suffices to verify this claim
for $V(D: \langle rxyzub \rangle^{\infty})$.

Let $P = \left( D: \left\langle rxyzub\right\rangle^{\infty}
\right)$.  Using Macaulay 2, it can be checked that $P$ is the
unit ideal $\mathbb Q[r,x,y,z,u,b]$.  It follows that the minimum of
the function $p$ above must occur when one of the $x,y,z,u,b$ is 1
(in other words, on the ``boundary'').

This process now continues, recursively, by next finding the critical points of
the functions $p(r,1,y,z,u,b), \ldots, p(r,x,y,z,u,1)$, and checking
that they either do not occur in $(0,1)^5$ or that the function is
nonnegative when they do.  As noted before, a difficulty is that
these new saturations do not always produce unit ideals.
Therefore, we finish by showing that for each saturation $P$,
there is an elimination ideal $P_i$ of $P$ such that $V(P_i)$
contains no positive numbers in $(0,1)$. Since each $P_i$ is
generated by a single-variable polynomial, we use Sturm's
algorithm to verify such a claim symbolically.  These computations were also performed in Macaulay 2.  This completes the proof of the theorem.
\end{proof}

As a final remark, we should note that there are some good tools
for the numerical exploration of such problems. Namely, the
program SOSTOOLS written by Prajna, Papachristodoulou, and Parrilo is an excellent resource for
investigating real algebraic
systems.\footnote{http://www.cds.caltech.edu/sostools/}


\end{document}